\newtheorem{theorem}{Theorem}[section]
\theoremstyle{definition}
\theoremstyle{remark}
\numberwithin{equation}{section} \errorcontextlines=0
\newcommand{\la}{\lambda}
\begin{document}
\title[Quantum Discords of Tripartite Quantum Systems ]
{Quantum Discords of Tripartite Quantum Systems  }
\author{Jianming Zhou}
\address{Zhou: Department of Mathematics, Shanghai University, Shanghai 200444, China}
\email{272410225@qq.com}

\author{Xiaoli Hu*}
\address{Hu: School of Artificial Intelligence, Jianghan University, Wuhan, Hubei 430056, China}
\email{xiaolihumath@jhun.edu.cn (Corresponding author)}

\author{Naihuan Jing}
\address{Jing: Department of Mathematics,
   North Carolina State University,
   Raleigh, NC 27695, USA}
\email{jing@ncsu.edu}
\keywords{Quantum discord, quantum correlations, tripartite quantum states, optimization on manifolds}
\thanks{*Corresponding author: Xiaoli Hu (xiaolihumath@jhun.edu.cn)}

\keywords{Quantum discord, quantum correlations, tripartite quantum states, optimization on manifolds}
\subjclass[2010]{Primary: 81P40; Secondary: 81Qxx}

\begin{abstract}
 The quantum discord of bipartite systems is one of the best-known measures of non-classical correlations and an important quantum resource. In the recent work appeared in [Phys. Rev. Lett 2020, 124:110401], the quantum discord has been generalized to multipartite systems. In this paper, we give analytic solutions of the quantum discord for tripartite states with fourteen parameters.

\end{abstract}
\maketitle
\section{Introduction}

The quantum discord usually involves with quantum entanglement and umentangled quantum correlations in quantum systems. It measures the total non-classical correlation in a quantum system, and has attracted widespread attention since its appearance. Applications of the non-entanglement quantum correlations in quantum information processings have been extensively studied, including the quantum computing scheme of DQC1 \cite{DSC} and Grover search algorithm \cite{CF} etc. This partly explains why quantum schemes surpass classical schemes. Meanwhile, the quantum discord as a non-classical correlation is one of the important quantum resources and is ubiquitous in many areas of modern physics ranging from condensed matter physics, quantum optics, high-energy physics to quantum chemistry, thus can be regarded as one of the fundamental non-classical correlations besides entanglement and EPR-steerable states \cite{OZ,L}.

The quantum discord is defined as the maximal difference between the quantum mutual information without and with a von Neumann projective measurement applying to one part of the bipartite system. For tripartite and lager systems, some generalizations of the discord have been proposed \cite{RS,OW,GBG,CAP,MPS, RLB}, and have been used in quantum information processings. It is well-known that quantum discord is extremely difficult to evaluate and most exact solutions are only for the X-type quantum states (cf. \cite{ZJW, St, YF, JY}). This paper is devoted to quantification of the quantum correlation in tripartite and larger systems to derive some exact solutions for non-X-type states, and we hope it can contribute to better understanding and more effective use of quantum states in realizing quantum information processing schemes.

The paper is organized as follows. We first introduce the generalized discord for tripartite systems \cite{RLB} based on that of bipartite systems \cite{OZ}. We derive analytic solutions for tripartite states with fourteen parameters. Furthermore, the quantum discord of some well-known states (such as GHZ states) are computed.
\section{Generalization of quantum discord to tripartite states}

For a bipartite state $\phi^{bc}$ on system $H_B\otimes H_C$, the quantum mutual information is $I(\phi^{bc}):=S_B(\phi^b)+S_C(\phi^c)-S_{BC}(\phi^{bc})$, where $S(\phi^{X})=\mathrm{Tr}\phi^{X} \log_2(\phi^{X})$ is the von Neumann entropy of the quantum state on system X. Set $\{\Pi^B_k\}$ to be an one-dimensional von Neumann projection operator on subsystem $B$ which satisfies $\sum_k\Pi_k^B=I, (\Pi_k^B)^2=\Pi_k^B,\Pi_k^B\Pi_{k'}^B=\delta_{kk^{'}}$. Then the state $\phi^{bc}$ under the measurement $\{\Pi^B_k\}$ is changed into
$$\phi^c_k=\frac{1}{p_k}\mathrm{Tr}_B(I\otimes \Pi^B_k)\phi^{bc}(I\otimes \Pi^B_k)$$
with the probability $p_k=\mathrm{Tr}(I\otimes \Pi^B_k)\phi^{bc}(I\otimes \Pi^B_k)$. For simplicity, we denote by $\Pi^X$ the measurement $\{\Pi_k^X\}$ on system $X$. The quantum conditional entropy is simply given by $S_{C|\Pi^B}(\phi^{bc})=\sum_kp_kS(\phi^c_k)$.
Then the measurement-induced quantum mutual information is given by
$$C(\phi^{bc})=S_C(\phi^c)-\min S_{C|\Pi^B}(\phi^{bc}).$$
By Olliver and Zurek \cite{OZ}, the original definition of the quantum discord $Q(\rho)$ is the difference of the quantum mutual information $I(\phi^{bc})$ and the measurement-induced quantum mutual information $C(\phi^{bc})$, i.e.
\begin{equation}
Q(\phi^{bc})=I(\phi^{bc})-C(\phi^{bc})=\min_{\Pi^B}\{S_{C|\Pi^B}(\phi^{bc})-S_{C|B}(\phi^{bc})\},\label{eq1}
\end{equation}
where $S_{C|B}(\phi^{bc})=S_{BC}(\phi^{bc})-S_B(\phi^b)$ is the unmeasured conditional state on subsystem $C$.

For the tripartite system $H_A\otimes H_B\otimes H_C$, we consider the $BC$ composite system as the first subsystem and $A$-system as the second subsystem. The state $\rho^{abc}$ of system $H_{A}\otimes H_{B}\otimes H_C$ gives arise to a state on $BC$-subsystem after the von Neumann measurement $\{\Pi_j^A\}$ on $A$ subsystem. Namely, it takes the following form:
\begin{equation}
\begin{split}
\rho^{bc}_j=\frac{1}{p^{bc}_j}\mathrm{Tr}_A(\Pi_j^A\otimes I)\rho^{abc}(\Pi_j^A\otimes I)
\end{split}
\end{equation}
with probability $p^{bc}_j=\mathrm{Tr}(\Pi_j^A\otimes I)\rho^{abc}(\Pi_j^A\otimes I)$.The measured quantum mutual information of $\rho^{abc}$ is naturally given by
\begin{equation}
\begin{split}
\mathcal{J}(\rho^{abc}|\Pi^A)=S_{BC}(\rho^{bc})-S_{BC|\Pi^A}(\rho^{abc}).
\end{split}
\end{equation}
The quantity of classical correlation of the tripartite state $\rho^{abc}$ is
\begin{equation}
\begin{split}
\mathcal{C}(\rho^{abc})=\max_{\Pi^A}\mathcal{J}(\rho^{abc}|\Pi^A)=S_{BC}(\rho^{bc})-\min_{\Pi^A} S_{BC|\Pi^A}(\rho^{abc}).
\end{split}
\end{equation}
We know that the quantum mutual information $I(\rho^{abc})=S_A(\rho^a)+S_{BC}(\rho^{bc})-S_{ABC}(\rho^{abc})$. Similar to Eq.(\ref{eq1}), the generalized quantum discord of the tripartite state $\rho^{abc}$ can be defined as
\begin{equation}
\begin{split}
\mathcal{Q}(\rho^{abc})=I(\rho^{abc})-C(\rho^{abc})=\min_{\Pi^A}\{S_{BC|\Pi^A}(\rho^{abc})-S_{BC|A}(\rho^{abc})\},
\end{split}
\end{equation}
where $S_{BC|A}(\rho^{abc})=S_{ABC}(\rho^{abc})-S_{A}(\rho^a)$ is the unmeasured conditional entropy on $BC$-bipartite subsystem.

In order to evaluate the quantity $\min_{\Pi^A}S_{BC|\Pi^A}(\rho^{abc})$, the multipartite measurement based on conditional operators can be constructed as follows: \cite{MM}
\begin{equation}
\begin{split}
\Pi_{jk}^{AB}=\Pi_j^A\otimes\Pi_{k|j}^B
\end{split}
\end{equation}
with the measurement ordering from $A$ to $B$. The projector $\Pi_{k|j}^B$ on subsystem $B$ is conditional measurement outcome of $A$. These projectors satisfy $\sum_k\Pi^B_{k|j}=I^B, \sum_j\Pi_j^A=I^A$.
Then after the measurement $\Pi_{jk}^{AB}$, the state $\rho^{abc}$ is collapsed to a state on subsystem $C$, i.e.
\begin{equation}
\begin{split}
\rho^c_{jk}=\frac{1}{p^c_{jk}}\mathrm{Tr}_{AB}(\Pi_{jk}^{AB}\otimes I)\rho^{abc}(\Pi_{jk}^{AB}\otimes I)
\end{split}
\end{equation}
with the probability $p^c_{jk}=\mathrm{Tr}(\Pi_{jk}^{AB}\otimes I)\rho^{abc}(\Pi_{jk}^{AB}\otimes I)$. The conditional entropy after the $AB$-bipartite measurement is
$$S_{C|\Pi^{AB}}(\rho^{abc})=\sum_{jkl}p_{jk}^{c}\lambda_l^{(jk)}\log_2\lambda_l^{(jk)},$$
where $\lambda_l^{(jk)}$ are eigenvalues of state $\rho^c_{jk}$.

Let $\rho_{\Pi^X}=\sum_{\Pi^X}\Pi^X\rho\Pi^X$ be the state after measurement $\Pi^X$. Then for a bipartite state $\rho^{ab}$, the conditional entropy on subsystem $B$ after the measurement on subsystem $A$ is
\begin{equation}
\begin{split}
S_{B|\Pi^A}(\rho^{ab})=\sum_jp_jS_{B}(\rho^b_j).
\end{split}
\end{equation}
By  \cite[Eq.(6)]{RLB}, the entropy of the measured system can always be decomposed as
\begin{equation}S_{AB}(\rho^{ab}_{\Pi^A})=S_A(\rho^{ab}_{\Pi^A})+S_{B|\Pi^A}(\rho^{ab}). \label{S(AB)}
\end{equation}
For the tripartite system, using the measurement $\Pi^{AB}$, we have
\begin{equation} S_{ABC}(\rho^{abc}_{\Pi^{AB}})-S_{AB}(\rho^{abc}_{\Pi^{AB}})=S_{C|\Pi^{AB}}(\rho^{abc}),\label{S(ABC1)}
\end{equation}
when the measurement on $A$ system is $\Pi^A$, then we have
\begin{equation} S_{ABC}(\rho^{abc}_{\Pi^{A}})-S_{A}(\rho^{abc}_{\Pi^{A}})=S_{BC|\Pi^{A}}(\rho^{abc}).\label{S(ABC2)}
\end{equation}
By Eq.(\ref{S(AB)}), Eq.(\ref{S(ABC1)}), Eq.(\ref{S(ABC2)}), we have that
 $$S_{BC|\Pi^A}(\rho^{abc})=S_{B|\Pi^A}(\rho^{ab})+S_{C|\Pi^{AB}}(\rho^{abc}).$$
Meanwhile, $S_A(\rho^{abc}_{\Pi^{AB}})=S_A(\rho^{abc}_{\Pi^{A}})$, so the generalization discord of a tripartite state can be written as \cite{RLB}
\begin{equation}\label{1.2}
\begin{split}
\mathcal{Q}(\rho):=\min_{\Pi^{AB}}[-S_{BC|A}(\rho)+S_{B|\Pi^A}(\rho)+S_{C|\Pi^{AB}}(\rho)].
\end{split}
\end{equation}

\section{Quantum Discord of non-X Qubit-Qutrit state}
For the product states in the tripartite system, the discord has the special property that it reduces to the standard bipartite discord when only bipartite quantum correlations are present. This means $\mathcal{Q}_{ABC}(\rho^{x}\otimes\rho^y)=\mathcal{Q}_{X}(\rho^{x})$ for $X=AB, BC$ and $AC$ subsystem. We consider the following tripartite states
\begin{equation}\label{eq:3.1}
\begin{split}
\rho^{abc}=&\frac{1}{8}(I_8+a_3\sigma_3\otimes I_4+I_2\otimes b_3\sigma_3\otimes I_2+I_4\otimes\sum_i^3c_i\sigma_i
\\+&\sum_i^3r_i\sigma_i\otimes\sigma_i\otimes I_2+\sum_i^3s_i\sigma_i\otimes I_2\otimes\sigma_i
+\sum_i^3T_i\sigma_i\otimes\sigma_i\otimes\sigma_i),
\end{split}
\end{equation}
where $I_d$ represents the unit matrix of order $d$, and $\sigma_i(i=1,2,3)$ are Pauli matrices. The parameters $a_3, b_3, c_i, r_i, s_i, T_i\in \mathbb{R}$ and they are confined within the internal $[-1,1]$. Its matrix has the following form:
\begin{equation}
\begin{split}
\rho=\left(
  \begin{array}{cccccccc}
    * & * & 0 & 0 & 0 & * & * & * \\
    * & * & 0 & 0 & * & 0 & * & * \\
    0 & 0 & * & * & * & * & 0 & * \\
    0 & 0 & * & * & * & * & * & 0 \\
    0 & * & * & * & * & * & 0 & 0 \\
    * & 0 & * & * & * & * & 0 & 0 \\
    * & * & 0 & * & 0 & 0 & * & * \\
    * & * & * & 0 & 0 & 0 & * & * \\
  \end{array}
\right).
\end{split}
\end{equation}

Let $\{|j\rangle\langle j|, j=0,1\}$ be the computational base, then any von Neumann measurement on system $X$ can be written as $\{\Pi_j^X=V|j\rangle\langle j|V^{\dagger}|, j=0,1\}$ for some unitary matrix $V\in \mathrm{SU}(2)$. Any unitary matrix can be written as $V=tI+\sqrt{-1}\sum_ky_k\sigma_k$ with $t, y_k\in \mathbb{R}$.
 When the measurement $\{\Pi^X_j\}$ is performed locally on one part of the composite system $Y\otimes X$, the ensemble $\{\rho_j^Y,p_j^Y\}$ is given by
 $\rho_j^Y=\frac{1}{p_j^Y}\mathrm{Tr}_X(I\otimes\Pi^X_j)\rho^{YX}(I\otimes \Pi^X_j)$ with the probability $p_j^Y=\mathrm{Tr}[\rho^{YX}(I\otimes \Pi_j^X)]$.

It follows from symmetry that
\small\begin{equation}
\begin{split}
V^{\dagger}\sigma_1V&=(t^2+y_1^2-y_2^2-y_3^2)\sigma_1+2(ty_3+y_1y_2)\sigma_2+2(-ty_2+y_1y_3)\sigma_3,\\
V^{\dagger}\sigma_2V&=(t^2+y_2^2-y_3^2-y_1^2)\sigma_1+2(ty_1+y_2y_3)\sigma_3+2(-ty_3+y_1y_2)\sigma_1,\\
V^{\dagger}\sigma_3V&=(t^2+y_3^2-y_1^2-y_2^2)\sigma_1+2(ty_2+y_1y_3)\sigma_3+2(-ty_1+y_2y_3)\sigma_2.\\
\end{split}
\end{equation}
Introduce new variables $z_1^X=2(-ty_2+y_1y_3),z_2^X=2(ty_1+y_2y_3),z_3^X=(t^2+y_3^2-y_1^2-y_2^2)$, then $(z_1^X)^2+(z_2^X)^2+(z_3^X)^2=1$.
Therefore $\Pi_j^X\sigma_k\Pi_j^X=(-1)^jz_k^X\Pi_j^X$ for $j=0,1$ and $k=1,2,3$.

For the tripartite state $\rho^{abc}$, the conditional state on $BC$ subsystem after measurement $\{\Pi_j^A  (j=0,1)\}$ on subsystem $A$ is
\begin{equation}
\begin{split}
\rho_j^{bc}&=\frac{1}{p_j^{bc}}(\Pi_1^A\otimes I_2\otimes I_2)\rho^{abc}(\Pi_1^A\otimes I_2\otimes I_2)\\
&=\frac{1}{p_j^{bc}}[(1+(-1)^ja_3z_3^A)I_2\otimes I_2+b_3\sigma_3\otimes I_2+(-1)^j\sum_i^3r_iz_i^A\sigma_i\otimes I_2\\
&+\sum_i^3(c_i+(-1)^js_iz_i^A)I_2\otimes\sigma_i+(-1)^j\sum_i^3T_iz_i^A\sigma_i\otimes\sigma_i],
\end{split}
\end{equation}
where the probabilities are
$$p_j^{bc}=\mathrm{Tr}((\Pi_j^A\otimes I_2\otimes I_2)\rho^{abc}(\Pi_j^A\otimes I_2\otimes I_2))=\frac{1}{2}[1+(-1)^ja_3z_3^A],$$
and $\sum_i^3(z^A_i)^2=1$. Therefore the reduced state of $\rho_j^{bc}$ is
$$\rho_j^b=\mathrm{Tr}_C\rho_j^{bc}=\frac{1}{2(1+(-1)^ja_3z_3^A)}[(1+(-1)^ja_3z_3^A)I_2+b_3\sigma_3+(-1)^j\sum_i^3r_iz_i^A\sigma_i]$$
with the probability $p_j^b=p_j^{bc}=\frac{1}{2}[1+(-1)^ja_3z_3^A]$. The eigenvalues of $\rho_j^b$ are
$$\lambda^{\pm}_j=\frac{1}{2(1+(-1)^ja_3z_3^A)}[1+(-1)^ja_3z_3^A\pm\sqrt{(b_3+(-1)^jr_3z_3^A)^2+\sum_i^2(r_iz_i^A)^2}].$$
We define the following entropy function
\begin{equation}
\begin{split}
H_\varepsilon(x)=\frac{1}{2}[(1+\varepsilon+x)\log_2(1+\varepsilon+x)+(1+\varepsilon-x)\log_2(1+\varepsilon-x)].
\end{split}
\end{equation}
Then measured conditional entropy of $B$ subsystem can be obtained as \cite{OZ,L,JY,V,GPW,SW}
\begin{equation}
\begin{split}
S_{B|\Pi^A}(\rho)&=-\sum_{j}p_j^b({\lambda^{+}_j \log_2\lambda^{+}_j}+\lambda^{-}_j \log_2\lambda^{-}_j)\\
&=-\frac{1}{2}[H_{a_3z_3^A}(A_+)+H_{-a_3z_3^A}(A_-)-2H(a_3z_3^A)-2],
\end{split}
\end{equation}
where $A_\pm=\sqrt{(b_3\pm r_3z_3^A)^2+\sum_i^2(r_iz_i^A)^2}$.

After measurement $\Pi_{k|j}^B$ on $BC$ system, the state $\rho_j^{bc}$ is changed to
\begin{equation}
\begin{split}
\rho_{jk}^{c}=&\frac{1}{p_{jk}^c}[(1+(-1)^ja_3z_3^A+(-1)^kb_3z_3^B+(-1)^{j+k}\sum_i^3r_iz_i^Az_i^B)I_2
\\+&\sum_i^3(c_i+(-1)^js_iz_i^A+(-1)^{k+j}T_iz_i^Az_i^B)\sigma_i], (j,k=0,1)
\end{split}
\end{equation}
with the probability  ($k=0,1$)
\begin{equation}
p_{0k}^{c}=\frac{1}{2(1+a_3z_3^A)}(1+\alpha_k),
 \ \  p_{1k}^{c}=\frac{1}{2(1-a_3z_3^A)}(1+\beta_k),
\end{equation}
where
\small$\alpha_k=a_3z_3^A+(-1)^k(b_3z_3^B+\sum_i^3r_iz_i^Az_i^B), \beta_k=-a_3z_3^A+(-1)^k(b_3z_3^B-\sum_i^3r_iz_i^Az_i^B)$.
The non-zero eigenvalues of $\rho^c_{jk}$ are given by
\begin{equation}\label{values}
\lambda_{0k}^{\pm}=\frac{1}{2(1+\alpha_k)}(1+\alpha_k\pm\gamma_k), \ \ \lambda_{1k}^{\pm}=\frac{1}{2(1+\beta_k)}(1+\beta_k\pm\delta_k), k=0,1,
\end{equation}
where
\begin{equation*}
\begin{split}
\gamma_k&=[\sum_i^3(c_i+s_iz_i^A+(-1)^kT_iz_i^Az^B_i)^2]^{\frac{1}{2}},\\
\delta_k&=[\sum_i^3(-c_i+s_iz_i^A+(-1)^kT_iz_i^Az^B_i)^2]^{\frac{1}{2}}.
\end{split}
\end{equation*}
According to the fact that the eigenvalues in Eq.(\ref{values}) are nonnegative, we have $\sqrt{\sum_i^3a_i^2}+\sqrt{\sum_i^3b_i^2}+\sqrt{\sum_i^3r_i^2}\leq 1$.

The entropy of $\rho^{abc}$ under the measurement $\Pi^{AB}$ is given by
\begin{equation}
\begin{split}
S_{C|\Pi^{AB}}(\rho)=&-\sum_{j,k} p^c_{jk}(\la_{jk}^{+}\log_2\la_{jk}^{+}+\la_{jk}^{-}\log_2\la_{jk}^{-})\\
=&-\frac{1}{2(1+a_3z_3^A)}[H_{\alpha_0}(\gamma_0)+H_{\alpha_1}(\gamma_1)-2H_{a_3z_3^A}(\frac{\alpha_0-\alpha_1}{2})]\\
&-\frac{1}{2(1-a_3z_3^A)}[H_{\beta_0}(\delta_0)+H_{\beta_1}(\delta_1)-2H_{-a_3z_3^A}(\frac{\beta_0-\beta_1}{2})]+2.
\end{split}
\end{equation}
In particularly, $a_3z_3^A=\frac{\alpha_0+\alpha_1}{2}$.

Let $G(z_1^A,z_2^A,z_3^A)=1-S_{B|\Pi^A}(\rho)$ and $F(z_1^A,z_2^A,z_3^A,z^B_1,z^B_2,z^B_3)=2 -S_{C|\Pi^{AB}}(\rho)$, then we have the following result.
\begin{theorem}\label{Th1}
For the non-X-states $\rho$ in Eq(\ref{eq:3.1}) with 14 parameters, the quantum discord is given by
\begin{equation}
\begin{split}
\mathcal{Q}(\rho)
=&-S_{ABC}(\rho)+S_{A}(\rho)+\min\{S_{B|\Pi^A}(\rho)+S_{C|\Pi^{AB}}(\rho)\}\\
=&3+\sum_{i=1}^8{\lambda_i\log_2\lambda_i}-\sum_{k=1}^2\lambda_k^{a}\log_2\lambda_k^{a}-\max_{z^X_i\in[0,1],\sum_i (z_i^X)^2=1}\{G+F\},
\end{split}
\end{equation}
where $\lambda_i (i=1,\cdots,8)$ are the eigenvalues of $\rho^{abc}$, $\lambda_k^{a}=\frac{1}{2}[1+(-1)^k a_3], (k=0,1) $ are eigenvalues of $\rho^{abc}$ on subsystem $A$ and $X$ represents subsystem $A, B$.
\end{theorem}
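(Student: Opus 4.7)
The plan is to start from the tripartite discord formula derived in Eq.~(\ref{1.2}), namely
\begin{equation*}
\mathcal{Q}(\rho)=\min_{\Pi^{AB}}\bigl[-S_{BC|A}(\rho)+S_{B|\Pi^A}(\rho)+S_{C|\Pi^{AB}}(\rho)\bigr],
\end{equation*}
and to isolate the measurement-independent contribution before exploiting the closed-form expressions for $S_{B|\Pi^A}$ and $S_{C|\Pi^{AB}}$ already obtained earlier in this section.

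First, I would rewrite the unmeasured conditional entropy as $S_{BC|A}(\rho)=S_{ABC}(\rho)-S_A(\rho^a)$, which does not depend on $\Pi^{AB}$ and can therefore be pulled outside the minimum. The reduced state on subsystem $A$ is $\rho^a=\mathrm{Tr}_{BC}\,\rho^{abc}=\tfrac{1}{2}(I_2+a_3\sigma_3)$, so its two eigenvalues are $\lambda_k^a=\tfrac{1}{2}(1+(-1)^k a_3)$ for $k=0,1$, giving $S_A(\rho^a)=-\sum_k \lambda_k^a\log_2\lambda_k^a$. The global entropy is expressed as $S_{ABC}(\rho)=-\sum_{i=1}^8 \lambda_i\log_2\lambda_i$ directly in terms of the eigenvalues of the $8\times 8$ matrix $\rho^{abc}$, which are left symbolic.

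Next, using the parametrization $\Pi_j^X=\tfrac{1}{2}(I+(-1)^j\sum_k z_k^X\sigma_k)$ with $\sum_k (z_k^X)^2=1$ introduced earlier, both $S_{B|\Pi^A}(\rho)$ and $S_{C|\Pi^{AB}}(\rho)$ have already been written as explicit functions of $(z_i^A)$ and $(z_i^A,z_i^B)$ through the function $H_\varepsilon(x)$ and the auxiliary quantities $\alpha_k,\beta_k,\gamma_k,\delta_k$. With the definitions $G=1-S_{B|\Pi^A}$ and $F=2-S_{C|\Pi^{AB}}$, one has the identity $G+F=3-(S_{B|\Pi^A}+S_{C|\Pi^{AB}})$, so
\begin{equation*}
\min_{\Pi^{AB}}\{S_{B|\Pi^A}+S_{C|\Pi^{AB}}\}=3-\max\{G+F\},
\end{equation*}
with the maximum taken under the unit-norm constraints on $z^A$ and $z^B$. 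Substituting this identity together with the formulas for $S_{ABC}$ and $S_A$ into the expression obtained after extracting $-S_{BC|A}$ assembles exactly the claimed equality.

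The routine pieces are the extraction of the measurement-independent term, the identification of the spectrum of $\rho^a$, and the min-to-max conversion. The real content — and the only genuine obstacle — is to justify that every projective measurement $\Pi^{AB}$ on the qubit-qubit pair $AB$ is captured by the Bloch-vector parametrization $\Pi_j^X=\tfrac{1}{2}(I+(-1)^j\sum_k z_k^X\sigma_k)$, so that the minimum over $\Pi^{AB}$ reduces to a maximum over the compact manifold $S^2\times S^2$. The eight eigenvalues $\lambda_i$ are not diagonalized in closed form here, and the maximization of $G+F$ is left as a constrained optimization problem; the theorem thus compresses the 14-parameter tripartite discord into an optimization on $S^2\times S^2$, which the subsequent examples (e.g.\ GHZ states) then resolve by symmetry or numerically.
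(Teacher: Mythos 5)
Your proposal is correct and follows essentially the same route the paper takes (the paper gives no separate proof of this theorem, treating it as the direct assembly of Eq.~(\ref{1.2}) with $S_{BC|A}=S_{ABC}-S_A$, the spectrum of $\rho^a=\tfrac12(I_2+a_3\sigma_3)$, and the identity $\min\{S_{B|\Pi^A}+S_{C|\Pi^{AB}}\}=3-\max\{G+F\}$ over the Bloch-vector parametrization of the qubit projectors). Your observation that the optimization domain should really be the full sphere $S^2\times S^2$ rather than the positive octant $z_i^X\in[0,1]$ is a fair caveat about the paper's own statement, not a gap in your argument.
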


\begin{theorem}\label{Th2}
Let $r=\max{\{|r_1|,|r_2|\}}$, then $\max_{z^X_i\in[0,1],\sum_i (z_i^X)^2=1}\{G+F\}$ can be explicitly computed as follows.

Case1: when $a_3b_3r_3\leq0,r_3^2-r^2\geq a_3b_3r_3$, and $(b_3+r_3)(c_3+s_3)\leq 0$, we have
\begin{equation}
\max_{z^X_i\in[0,1],\sum_i (z_i^X)^2=1}\{G+F\}=G(0,0,1)+F(0,0,1,0,0,1),
\end{equation}
where
\begin{equation}
\begin{split}
G(0,0,1)=\frac{1}{2}[H_{a_3}(|b_3+r_3|)+H_{-a_3}(|b_3-r_3|)-2H(a_3)]
\end{split}
\end{equation}
and
\begin{equation}
\begin{split}
F(0,0,1,0,0,1)=&
\frac{1}{2(1+a_3)}[H_{\alpha_0}(\gamma_0)+H_{\alpha_1}(\gamma_1)-2H_{a_3}(b_3+r_3)]\\
+&\frac{1}{2(1-a_3)}[H_{\beta_0}(\delta_0)+H_{\beta_1}(\delta_1)-2H_{-a_3}(b_3-r_3)].
\end{split}
\end{equation}
In this case, the parameters are degenerated into $(k=0,1)$
\small$$\alpha_k=a_3+(-1)^k(b_3+r_3), \gamma_k=[\sum_i^3c_i^2+s_3^2+T_3^2+2(c_3s_3+(-1)^k(c_3T_3+s_3T_3))]^{\frac{1}{2}},$$
\small$$\beta_k=-a_3+(-1)^k(b_3-r_3),\delta_k=[\sum_i^3c_i^2+s_3^2+T_3^2+2(-c_3s_3+(-1)^k(s_3T_3-c_3T_3))]^{\frac{1}{2}}.$$

Case 2: (1) When $b_3=0, c_1s_1\leq0, s_1\leq|c_1|$ and $\max\{|r_1|,|r_2|,|r_3|\}=|r_1|$, we have
\begin{equation}
\max_{z^X_i\in[0,1],\sum_i (z_i^X)^2=1}\{G+F\}=G(1,0,0)+F(1,0,0,1,0,0),
\end{equation}
where
\begin{equation}
G(1,0,0)=\frac{1}{2}[H_{a_3}(r_1)+H_{-a_3}(r_1)-2H(a_3)]
\end{equation}
and
\begin{equation}
\begin{split}
F(1,0,0,1,0,0)=\frac{1}{2}[H_{r_1}(\gamma_0)+H_{-r_1}(\gamma_1)+H_{r_1}(\delta_0)+H_{-r_1}(\delta_1)-4H(r_1)].
\end{split}
\end{equation}
In this case, the parameters are degenerated into $(k=0,1)$
\small\begin{equation*}
\begin{split}
\gamma_k&=[\sum_i^3c_i^2+s_1^2+T_1^2+2(c_1s_1+(-1)^k(c_1T_1+s_1T_1))]^{\frac{1}{2}},\\
\delta_k&=[\sum_i^3c_i^2+s_1^2+T_1^2+2(-c_1s_1+(-1)^k(c_1T_1-s_1T_1))]^\frac{1}{2}.
\end{split}
\end{equation*}

(2) When $b_3=0, c_1s_1\leq0, s_1\leq|c_1|$ and $\max\{|r_1|,|r_2|,|r_3|\}=|r_2|$, we have
\begin{equation}
\max_{z^X_i\in[0,1],\sum_i (z_i^X)^2=1}\{G+F\}=G(0,1,0)+F(0,1,0,0,1,0),
\end{equation}
where
\begin{equation}
G(0,1,0)=\frac{1}{2}[H_{a_3}(r_2)+H_{-a_3}(r_2)-2H(a_3)]
\end{equation}
and
\begin{equation}
\begin{split}
F(0,1,0,0,1,0)=&\frac{1}{2}[H_{r_2}(\gamma_0)+H_{-r_2}(\gamma_1)+H_{r_2}(\delta_0)+H_{-r_2}(\delta_1)-4H(r_2)].
\end{split}
\end{equation}
In this case, the parameters are degenerated into $(k=0,1)$
\small\begin{equation*}
\begin{split}
\gamma_k&=[\sum_i^3c_i^2+s_2^2+T_2^2+2(c_2s_2+(-1)^kc_2T_2+s_2T_2)]^{\frac{1}{2}};\\
\delta_k&=[\sum_i^3c_i^2+s_2^2+T_2^2+2(-c_2s_2+(-1)^kc_2T_2-s_2T_2)]^{\frac{1}{2}}.
\end{split}\end{equation*}
\end{theorem}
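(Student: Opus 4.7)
The plan is to treat $\max_{z^A,z^B}(G+F)$ as a constrained optimization on the product of two unit $2$-spheres and to reduce it in two stages: first, for each fixed $z^A$, maximize $F(z^A,z^B)$ over $z^B\in S^2$; then maximize $G(z^A)+\max_{z^B}F(z^A,z^B)$ over $z^A\in S^2$. Because $F$ depends on $z^B$ only through the bilinear products $z_i^A z_i^B$ appearing in $\alpha_k,\beta_k,\gamma_k,\delta_k$, the inner Lagrange equation $\nabla_{z^B}F=\mu z^B$ has a transparent structure; under the case hypotheses it will collapse and force the optimal $z^B$ to be parallel to the same coordinate axis that $z^A$ eventually occupies.

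Once the search is restricted to aligned axial directions $z^A=z^B=\pm e_m$ with $m\in\{1,2,3\}$, the proof reduces to showing (i) that each such point is a critical point of $G+F$ on $S^2\times S^2$ and (ii) that under the stated parameter conditions the axis predicted by the theorem yields the largest value. For (i), one checks that at $z^A=\pm e_m$ the partial derivatives in the two transverse directions vanish by parity, since each summand of $G+F$ is invariant under $z_i^A\mapsto -z_i^A$ for $i\ne m$. For (ii), one expands $G(e_m)+F(e_m,e_m)$ using the closed-form expressions for $S_{B|\Pi^A}$ and $S_{C|\Pi^{AB}}$ in terms of $H_\varepsilon$, and uses the fact that $H_\varepsilon(\cdot)$ is even and convex in its argument, hence monotone in $|\cdot|$. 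In Case 1 the condition $r_3^2-r^2\ge a_3 b_3 r_3$ together with $a_3 b_3 r_3\le 0$ is precisely what is needed to dominate the $B$-contribution at $z^A=e_3$ over its $e_1,e_2$ competitors, while $(b_3+r_3)(c_3+s_3)\le 0$ selects the aligned-sign configuration that maximizes the $F$-part. Case 2 is structurally identical: $b_3=0$ removes the $(b_3\pm r_3)$ competition, $\max\{|r_1|,|r_2|\}=|r_i|$ selects the axis $e_i$, and the sign--magnitude conditions $c_1 s_1\le 0$ and $s_1\le |c_1|$ play the analogous role for $F$.

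The principal obstacle is justifying (i)--(ii) globally: $G+F$ is neither convex nor concave on $S^2\times S^2$, so interior critical points cannot be ruled out by any single monotonicity argument. I plan to handle this by slicing along great-circle arcs joining two axial points, parametrized by a single angle $\theta$. Along each such arc the derivatives of the $H_\varepsilon$-terms combine into one-variable expressions whose sign is controlled by the parameter inequalities of the theorem, so that the restriction of $G+F$ to the arc is either monotone or concave with maximum at one endpoint; this reduces the global maximum to the finite axial comparison handled in (ii). The delicate step, and where most of the actual work will lie, is translating the algebraic hypotheses on $(a_3,b_3,r_i,c_i,s_i,T_i)$ into the concrete arc-derivative sign inequalities that make each reduction rigorous.
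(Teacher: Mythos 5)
Your proposal is a strategy outline rather than a proof: both places where the theorem's hypotheses actually have to be used are deferred. The paper's argument has exactly two substantive ingredients. First, for the outer variable it does not run a Lagrange/great-circle analysis at all: under $a_3b_3r_3\le 0$ and $r_3^2-r^2\ge a_3b_3r_3$ it simply invokes the known bipartite X-state result of Jing--Yu to conclude $\max G=G(0,0,1)$ (resp.\ $G(\pm e_1)$ or $G(\pm e_2)$ in Case 2). Second, once $z^A=e_3$ is fixed, $F$ collapses to a function of the single variable $z_3^B$ --- not because a Lagrange system ``collapses,'' but because every occurrence of $z^B$ in $\alpha_k,\beta_k,\gamma_k,\delta_k$ is through the products $z_i^Az_i^B$, which vanish for $i=1,2$. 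The whole proof is then the explicit computation of $\partial F/\partial z_3^B$ and a term-by-term sign estimate showing it is nonnegative on $[0,1]$; this is precisely where the hypothesis $(b_3+r_3)(c_3+s_3)\le 0$ enters (it forces $\gamma_0\ge\gamma_1$, $\alpha_1\ge\alpha_0$, etc., which makes each group of logarithmic terms nonnegative). Your sketch never identifies this monotonicity computation, and instead attributes the role of $(b_3+r_3)(c_3+s_3)\le 0$ to ``selecting the aligned-sign configuration that maximizes the $F$-part,'' which is not what it does.

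The part of your plan that is genuinely different --- the claim that the global maximum over $S^2\times S^2$ can be certified by checking criticality at axial points and then controlling the sign or concavity of $G+F$ along great-circle arcs --- is exactly the step you concede ``is where most of the actual work will lie,'' and it is far from routine: $G+F$ is a sum of many $H_\varepsilon$ terms with arguments that are square roots of quadratics in the $z$'s, and there is no evident reason the arc restrictions are monotone or concave without conditions beyond those in the theorem. (To be fair, the paper's own reduction --- maximizing $G$ first and then maximizing $F$ only along the slice $z^A=\arg\max G$ --- also only produces a candidate value for $\max(G+F)$ rather than a watertight joint maximization; your nested decomposition $\max_{z^A}[G(z^A)+\max_{z^B}F(z^A,z^B)]$ is the logically correct one, but you do not carry out the outer maximization either.) As it stands the proposal is missing the decisive derivative-sign computation and the global reduction, so it does not constitute a proof.
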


\begin{proof}
By definition, we have
\begin{equation}
\begin{split}
G+F&=H_{a_3z_3^A}(B_+)+H_{-a_3z_3^A}(B_-)-2H(a_3z_3^A)\\
&+\frac{1}{2(1+a_3z_3^A)}[H_{\alpha_0}(\gamma_0)+H_{\alpha_1}(\gamma_1)-2H_{a_3z_3^A}(\frac{\alpha_0-\alpha_1}{2})]\\
&+\frac{1}{2(1-a_3z_3^A)}[H_{\beta_0}(\delta_0)+H_{\beta_1}(\delta_1)-2H_{-a_3z_3^A}(\frac{\beta_0-\beta_1}{2})].
\end{split}
\end{equation} 

Note that $F$ is a function of six variables and the first three are exactly the variables of $G$.
Our strategy of locating the extremal points of $G+F$ is first finding the critical points 
$z_1^A, z_2^A, z_3^A$ of $G$ and
verify that at those points the critical points of $G$ are attainable, then we can find the maximal points of $F+G$. 

For case 1: $a_3b_3r_3\leq 0$ and $r_3^2-r^2\geq a_3b_3r_3$, by \cite{JY} we know that $\max{G(z_1^A,z_2^A,z_3^A)}=G(0,0,1)$, then the parameters in function $F$ are degenerated into ($k=0,1$)
\begin{equation*}
\begin{split}\alpha_k&=a_3+(-1)^k(b_3z_3^B+r_3z_3^B),\\
\beta_k&=-a_3+(-1)^k(b_3z_3^B-r_3z_3^B),\\
\gamma_k&=\{\sum_i^3c_i^2+s_3^2+T_3^2(z^B_3)^2+2[c_3s_3+(-1)^k(s_3T_3z^B_3+c_3T_3z^B_3)]\}^{\frac{1}{2}},\\
\delta_k&=\{\sum_i^3c_i^2+s_3^2+T_3^2(z^B_3)^2+2[-c_3s_3+(-1)^k(s_3T_3z^B_3-c_3T_3z^B_3)]\}^{\frac{1}{2}}.
\end{split}
\end{equation*}
Therefore, we have 
\begin{equation}
\begin{split}
&F(z^A_1,z^A_2,z^A_3,z^B_1,z^B_2,z^B_3)=F(0,0,1,z^B_3)\\
&=\frac{1}{2(1+a_3)}[H_{\alpha_0}(\gamma_0)+H_{\alpha_1}(\gamma_1)]+\frac{1}{2(1-a_3)}[H_{\beta_0}(\gamma_0)+H_{\beta_1}(\gamma_1)].
\end{split}
\end{equation}
When $(b_3+r_3)(c_3+s_3)\leq 0$, it can be observed that $F$ is an even function for $z_3^B\in [-1,1]$, so we just need to consider $z_3^B\in [0,1]$. The derivative of $F$ on $z_3^B$ is given by
{\scriptsize\begin{equation}
\begin{split}
&\frac{\partial{F}}{\partial{z_3^B}}=\frac{1}{4(1+a_3)}\{(b_3+r_3)\log_2\frac{(1+\alpha_1)^2[(1+\alpha_0)^2-\gamma_0^2]}{(1+\alpha_0)^2[(1+\alpha_1)^2-\gamma_1^2]}\\
&+\frac{-c_3T_3-s_3T_3+T_3^2z_3^B}{\gamma_1}\log_2\frac{1+\alpha_1+\gamma_1}{1+\alpha_1-\gamma_1}
+\frac{c_3T_3+s_3T_3+T_3^2z_3^B}{\gamma_0}\log_2\frac{1+\alpha_0+\gamma_0}{1+\alpha_0-\gamma_0}\}\\
&+\frac{1}{4(1-a_3)}\{(b_3-r_3)\log_2\frac{(1+\beta_1)^2[(1+\beta_0)^2-\delta_0^2]}{(1+\beta_0)^2[(1+\beta_1)^2-\delta_1^2]}\\
&+\frac{c_3T_3-s_3T_3+T_3^2z_3^B}{\delta_1}\log_2\frac{1+\beta_1+\delta_1}{1+\beta_1-\delta_1}\
+\frac{-c_3T_3+s_3T_3+T_3^2z_3^B}{\delta_0}\log_2\frac{1+\beta_0+\delta_0}{1+\beta_0-\delta_0}\};
\end{split}
\end{equation}}

If $b_3+r_3\leq 0$ and $c_3+s_3\geq 0$, we have $\gamma_0\geq \gamma_1$, $\alpha_1\geq \alpha_0$, $\delta_0\geq \delta_1$ and $\beta_1\geq \beta_0$, then
{\scriptsize\begin{equation}
(b_3+r_3)\log_2\frac{(1+\alpha_1)^2[(1+\alpha_0)^2-\gamma_0^2]}{(1+\alpha_0)^2[(1+\alpha_1)^2-\gamma_1^2]}\geq 0;(b_3-r_3)\log_2\frac{(1+\beta_1)^2[(1+\beta_0)^2-\delta_0^2]}{(1+\beta_0)^2[(1+\beta_1)^2-\delta_1^2]}\geq 0;
\end{equation}
\begin{equation}
\begin{split}
&\frac{-c_3T_3-s_3T_3+T_3^2z_3^B}{\gamma_1}\log_2\frac{1+\alpha_1+\gamma_1}{1+\alpha_1-\gamma_1}
+\frac{c_3T_3+s_3T_3+T_3^2z_3^B}{\gamma_0}\log_2\frac{1+\alpha_0+\gamma_0}{1+\alpha_0-\gamma_0}\\
\geq& \frac{-c_3T_3-s_3T_3+T_3^2z_3^B}{\gamma_0}\log_2\frac{1+\alpha_1+\gamma_1}{1+\alpha_1-\gamma_1}
+\frac{c_3T_3+s_3T_3+T_3^2z_3^B}{\gamma_0}\log_2\frac{1+\alpha_1+\gamma_1}{1+\alpha_1-\gamma_1}\\
=&\frac{2T_3^2z^B_3}{\gamma_0}\log_2\frac{1+\alpha_1+\gamma_1}{1+\alpha_1-\gamma_1}\geq 0;
\end{split}
\end{equation}
\begin{equation}
\begin{split}
&\frac{c_3T_3-s_3T_3+T_3^2z_3^B}{\delta_1}\log_2\frac{1+\beta_1+\delta_1}{1+\beta_1-\delta_1}
+\frac{-c_3T_3+s_3T_3+T_3^2z_3^B}{\delta_0}\log_2\frac{1+\beta_0+\delta_0}{1+\beta_0-\delta_0}\\
\geq &\frac{c_3T_3-s_3T_3+T_3^2z_3^B}{\delta_0}\log_2\frac{1+\beta_1+\delta_1}{1+\beta_1-\delta_1}
+\frac{-c_3T_3+s_3T_3+T_3^2z_3^B}{\delta_0}\log_2\frac{1+\beta_1+\delta_1}{1+\beta_1-\delta_1}\\
=&\frac{2T_3^2z^B_3}{\delta_0}\log_2\frac{1+\beta_1+\delta_1}{1+\beta_1-\delta_1}\geq 0.
\end{split}
\end{equation}}
Hence in this case we get $\frac{\partial{F}}{\partial{z^B_3}}\geq0$ when $z_3^B\in[0,1]$.

If $b_3+r_3\geq 0$ and $c_3+s_3\leq 0$, we also can show that $\frac{\partial F}{\partial z_3^B}\geq 0$ similarly. So $F$ is a strictly monotonically increasing function with $z^B_3\in[0,1]$.
Similarly we can check that $F$ is a strictly monotonically increasing function with respect to $z^B_1\in[0,1]$ or $z^B_2\in[0,1]$ in case 2.
\end{proof}

\begin{theorem}\label{Th4}For the Werner-GHZ state $\rho_w = c|\psi\rangle\langle\psi|+(1-c)\frac{I}{8}$, where $|\psi\rangle=\frac{|000\rangle+|111\rangle}{2}$, the quantum discord is
\begin{equation}
\begin{split}
\mathcal{Q}=\frac{1}{8}(1-c)log_2(1-c)+\frac{1+7c}{8}log_2(1+7c)-\frac{1}{4}(1+3c)log_2(1+3c).
\end{split}
\end{equation}
\end{theorem}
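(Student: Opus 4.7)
The plan is to evaluate $\mathcal{Q}(\rho)=-S_{ABC}(\rho)+S_A(\rho)+\min_{\Pi^{AB}}\{S_{B|\Pi^A}(\rho)+S_{C|\Pi^{AB}}(\rho)\}$ for the Werner-GHZ state by a direct spectral computation combined with a structural argument that pins down the optimal measurement on both $A$ and $B$.

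First, I would record the spectrum. Since $\rho_w$ is a convex combination of the rank-one projector onto $|\psi\rangle$ and the maximally mixed state, its eigenvalues are $(1+7c)/8$ with multiplicity one and $(1-c)/8$ with multiplicity seven, so
\[
S_{ABC}(\rho_w)=3-\tfrac{1+7c}{8}\log_2(1+7c)-\tfrac{7(1-c)}{8}\log_2(1-c).
\]
Because $\mathrm{Tr}_{BC}|\psi\rangle\langle\psi|=I_2/2$, the marginal on $A$ equals $I_2/2$ and $S_A(\rho_w)=1$.

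Second, I would observe that for any Bloch unit vector $\vec n$ on $A$ the post-measurement state on $BC$ takes the form $\rho_j^{bc}=c\,|\psi_j\rangle\langle\psi_j|+\tfrac{1-c}{4}I_{BC}$ with $|\psi_j\rangle=\langle j_{\vec n}|0\rangle|00\rangle+\langle j_{\vec n}|1\rangle|11\rangle$ a unit vector, so the spectrum of $\rho_j^{bc}$ is $\{(1+3c)/4,(1-c)/4,(1-c)/4,(1-c)/4\}$ independently of $\vec n$ and
\[
S_{BC|\Pi^A}(\rho_w)=2-\tfrac{1+3c}{4}\log_2(1+3c)-\tfrac{3(1-c)}{4}\log_2(1-c)
\]
is direction-independent. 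Third, I would fix $\vec n=\hat z$ on $A$ together with a $\sigma_3$-measurement on $B$. Then $\rho_j^{bc}=c|jj\rangle\langle jj|+\tfrac{1-c}{4}I_4$, whose $B$-marginal is diagonal in the $z$-basis with eigenvalues $(1\pm c)/2$; the ensuing $B$-measurement therefore lies in the eigenbasis of $\rho_j^b$ and saturates the general bound $S_{B|\Pi^A}+S_{C|\Pi^{AB}}\ge S_{BC|\Pi^A}$. Computing the ensuing $\rho_{jk}^c$ directly yields
\[
S_{B|\Pi^A}(\rho_w)=1-\tfrac{1+c}{2}\log_2(1+c)-\tfrac{1-c}{2}\log_2(1-c)
\]
and
\[
S_{C|\Pi^{AB}}(\rho_w)=1+\tfrac{1+c}{2}\log_2(1+c)-\tfrac{1+3c}{4}\log_2(1+3c)-\tfrac{1-c}{4}\log_2(1-c);
\]
the $\log_2(1+c)$ contributions cancel in the sum, and combining with $S_A(\rho_w)-S_{ABC}(\rho_w)$ the coefficient of $\log_2(1-c)$ collapses to $\tfrac{1}{8}$, reproducing the formula claimed in Theorem \ref{Th4}.

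The main obstacle is reducing the four-dimensional optimization over $S^2\times S^2$ to a single point. This is handled by the two observations above: the eigenvalues of $\rho_j^{bc}$ do not depend on the measurement direction on $A$, making $S_{BC|\Pi^A}(\rho_w)$ constant in the $A$-direction, while the general inequality $S_{B|\Pi^A}+S_{C|\Pi^{AB}}\ge S_{BC|\Pi^A}$ is saturated precisely when the $B$-measurement aligns with the eigenbasis of $\rho_j^b$, which happens automatically once $\vec n=\hat z$ is chosen on $A$; together these two facts certify that the value computed in Step 3 is the global minimum over all $\Pi^{AB}$.
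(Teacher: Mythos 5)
Your proof is correct, and it takes a genuinely different route from the paper's. The paper stays inside its parametrized $G$--$F$ machinery: it asserts $\max G=H(c)$, substitutes $\theta=cz_3^B$, and argues $F(\theta)$ is monotonically increasing so the optimum sits at $\theta=c$; this implicitly fixes the $A$-measurement at the maximizer of $G$ before optimizing $F$, and it leans on formulas derived for the family in Eq.~(\ref{eq:3.1}) even though $|\psi\rangle\langle\psi|$ contains cross terms such as $\sigma_1\otimes\sigma_2\otimes\sigma_2$ not present in that family. You instead exploit the spectral structure of the Werner mixture directly: the spectrum of $\rho_j^{bc}=c|\psi_j\rangle\langle\psi_j|+\frac{1-c}{4}I_4$ is independent of the $A$-direction, so $S_{BC|\Pi^A}$ is a constant lower bound for $S_{B|\Pi^A}+S_{C|\Pi^{AB}}$ (by nonnegativity of the residual bipartite discord of each $\rho_j^{bc}$), and the $\hat z\otimes\hat z$ measurement attains it. This certifies global optimality over the full measurement manifold without any monotonicity analysis, which is cleaner and arguably more rigorous than the paper's argument. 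One small imprecision: saturation of the chain-rule inequality is not equivalent to the $B$-measurement ``aligning with the eigenbasis of $\rho_j^b$'' in general --- it requires the measured conditional entropy of $\rho_j^{bc}$ to equal its quantum conditional entropy, which holds here because $\rho_j^{bc}=c|jj\rangle\langle jj|+\frac{1-c}{4}I_4$ is classical in the $z\otimes z$ basis; since you verify the two entropies explicitly and their sum equals the constant bound, the conclusion stands. Your final bookkeeping ($S_{ABC}$, $S_A=1$, and the cancellation of the $\log_2(1+c)$ terms) reproduces the stated formula exactly.
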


\begin{proof}
Obviously, $\max{\{G(z_1^A,z_2^A,z_3^A)\}}=H(c).$ Let $\theta=cz_3^B$, then
\begin{equation}
\begin{split}
&F(z_1^A,z_2^A,z_3^A,z_1^B,z_2^B,z_3^B)=F(\theta)\\
=&\frac{1}{2}[H_{\theta}(|c+\theta|)+H_{\theta}(|c-\theta|)+H_{-\theta}(|c+\theta|)+H_{-\theta}(|c-\theta|)]-2H(\theta).
\end{split}
\end{equation}
It is easy to see that $F(\theta)$ is monotonically increasing with respect to $\theta\in[0,1]$. So $\max{\{F(\theta)\}}=F(\max{\{\theta\}})=F(c)$.
Fig.1 shows the behavior of the function $\mathcal{Q}$.
\begin{figure}[!htb]
\centering
\includegraphics[width=3.5in]{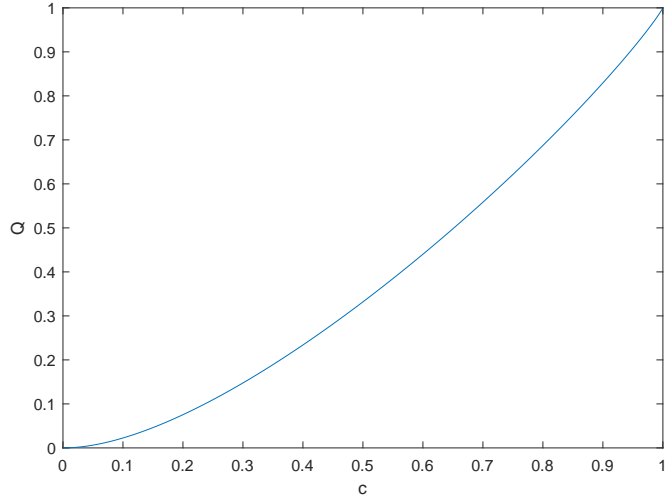}
{\small{\caption{The behavior of the quantum discord $\mathcal{Q}$ for the Werner-GHZ state in Theorem \ref{Th4}.}}}
\label{fig:1}
\end{figure}
\end{proof}

Next, we consider the following general tripartite state
\begin{equation}\label{eq:3.27}
\begin{split}
\rho&=\frac{1}{8}(I_8+\sum_i^3a_i\sigma_i\otimes I_4+I_2\otimes \sum_i^3b_i\sigma_i\otimes I_2+I_4\otimes \sum_i^3c_i\sigma_i\\&+\sum_i^3r_i \sigma_i\otimes \sigma_i\otimes I_2+\sum_i^3s_i\sigma_i\otimes I_2\otimes \sigma_i\\
&+\sum_i^3v_iI_2\otimes \sigma_i\otimes \sigma_i+\sum_i^3T_i\sigma_i\otimes \sigma_i\otimes \sigma_i).
\end{split}
\end{equation}
Let $a=\sqrt{\sum_i^3a_i^2}$ and $b=\sqrt{\sum_i^3b_i^2}$,
then we can get the quantum discord for some special cases.
\begin{theorem} \label{Th3}
For the general tripartite state $\rho$ in Eq.(\ref{eq:3.27}), we have the following results:

Case 1: when $a_i=v_i=T_i=0,r_1=r_2=r_3=r$, we have that
\begin{equation}
\begin{split}
\mathcal{Q}(\rho)=&\sum_i^8{\lambda_i\log_2\lambda_i}+4+H_{b}(r)+H_{-b}(r)-H(|b+r|)\\
-&\frac{1}{2}[H_{b+\mathrm{B}}(r)+H_{b-\mathrm{B}}(r)+H_{-b+\mathrm{B}}(r)+H_{-b-\mathrm{B}}(r)],
\end{split}
\end{equation}
where $\mathrm{B}=[\sum_i^3(s_i\frac{b_i}{b}+c_i)^2]^{\frac{1}{2}}$.

Case 2: when $b_i=v_i=T_i=0,r_1=r_2=r_3=r$, we have that
\begin{equation}
\begin{split}
\mathcal{Q}(\rho)&=\sum_i^8{\lambda_i\log_2\lambda_i}+3-H(a^2)
-\frac{1}{2}[H_{a}(r)+H_{-a}(r)-2H(a)]\\
&-\frac{1}{2(1+a)}[H_{a+\mathrm{A}}(r)+H_{a-\mathrm{A}}(r)-2H_{a}(r)]\\
&-\frac{1}{2(1-a)}[H_{-a+\mathrm{A}}(r)+H_{-a-\mathrm{A}}(r)-2H_{-a}(r)],
\end{split}
\end{equation}
where $\mathrm{A}=[\sum_i^3(s_i\frac{a_i}{a}+c_i)^2]^\frac{1}{2}$.

Case 3: when $r_i=T_i=v_i=0$, we have that
\begin{equation}
\begin{split}
\mathcal{Q}(\rho)&=\sum_i^8{\lambda_i\log_2\lambda_i}+3-H(a^2)
-\frac{1}{2}[H_{b}(a)+H_{-b}(a)-2H(a)]\\
&-\frac{1}{2(1+a)}[H_{a+\mathrm{A}}(b)+H_{a-\mathrm{A}}(b)-2H_{a}(b)]\\
&-\frac{1}{2(1-a)}[H_{b+\mathrm{A}}(b)+H_{-a-\mathrm{A}}(b)-2H_{-a}(b)],
\end{split}
\end{equation}
where $\mathrm{A}=[\sum_i^3(s_i\frac{a_i}{a}+c_i)^2]^\frac{1}{2}$.

Case 4: when $a_i=c_i=s_i=T_i=0,r_1=r_2=r_3=r, v_1=v_2=v_3=v$, we have that
\begin{equation}
\begin{split}
\mathcal{Q}(\rho)=&\sum_i^8{\lambda_i\log_2\lambda_i}+H_{b}(r)+H_{-b}(r)+4-H(|b+r|)\\
-&\frac{1}{2}[H_{b+v}(r)+H_{b-v}(r)+H_{-b+v}(r)H_{-b-v}(r)].
\end{split}
\end{equation}

Case 5: when $r_i=T_i=s_i=c_i=0, v_1=v_2=v_3=v$, we have that
\begin{equation}
\begin{split}
\mathcal{Q}(\rho)&=\sum_i^8{\lambda_i\log_2\lambda_i}+3-H(a^2)-\frac{1}{2}[H_{b}(a)+H_{-b}(a)-2H(a)]\\
&-\frac{1}{2(1+a)}[H_{a+v}(b)+H_{a-v}(b)-2H_{a}(b)]\\
&-\frac{1}{2(1-a)}[H_{-a+v}(b)+H_{-a-v}(b)-2H_{-a}(b)].
\end{split}
\end{equation}

Case 6: when $b_i=s_i=c_i=T_i=0, r_1=r_2=r_3=r, v_1=v_2=v_3=v$, we have that
\begin{equation}
\begin{split}
\mathcal{Q}(\rho)&=\sum_i^8{\lambda_i\log_2\lambda_i}+3-H(a^2)-\frac{1}{2}[H_{a}(r)+H_{-a}(r)-2H(a)]\\
&-\frac{1}{2(1+a)}[H_{a+v}(r)+H_{a-v}(r)-2H_{a}(r)]\\
&-\frac{1}{2(1-a)}[H_{-a+v}(r)+H_{-a-v}(r)-2H_{-a}(r)].
\end{split}
\end{equation}
\end{theorem}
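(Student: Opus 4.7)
The plan is to apply the discord formula of Theorem~\ref{Th1} to each of the six cases. Since the density matrix~(\ref{eq:3.27}) has the same tensor--Pauli structure as~(\ref{eq:3.1}), the derivations of $\rho^a$, $\rho_j^{bc}$, $\rho_{jk}^c$ and their eigenvalues carry over with the substitutions $a_3\mapsto\vec a\cdot\vec z^A$, $b_3\mapsto\vec b\cdot\vec z^B$, $r_3 z_3^A z_3^B\mapsto\sum_i r_i z_i^A z_i^B$, and analogous replacements involving $s$, $T$ and the new correlations $v_i$. The marginal $\rho^a$ is a qubit state with Bloch vector $\vec a$, hence has eigenvalues $\tfrac12(1\pm a)$, and its entropy depends only on $a=|\vec a|$. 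For each case it therefore suffices to compute the eigenvalues $\lambda_i$ of $\rho^{abc}$ (which enter only through $\sum\lambda_i\log_2\lambda_i$) and to maximize $G+F$ over $(\vec z^A,\vec z^B)\in S^2\times S^2$.

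The key simplification driving the six cases is the isotropy hypothesis on the correlation tensors. When $r_1=r_2=r_3=r$ (Cases~1, 2, 4, 6), the $r$-contribution to the Bloch vector of $\rho_j^b$ has norm $r$ for every unit vector $\vec z^A$, so the vector inside $G$ reduces to $|\vec b+(-1)^j r\vec z^A|^2=b^2+r^2+2(-1)^j r\,\vec b\cdot\vec z^A$, which is extremized along $\vec b$. In Case~1 this selects $\vec z^A=\vec b/b$; in Case~2, where $\vec b=0$, the symmetry instead selects $\vec z^A=\vec a/a$ so that $\vec a\cdot\vec z^A=a$ is maximized; in Case~3 the vanishing of $r,T,v$ makes $\rho_j^b$ independent of $\vec z^A$ and again forces $\vec z^A=\vec a/a$. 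With these choices the quantities $\alpha_k,\beta_k,\gamma_k,\delta_k$ collapse to scalars, and the projected residual lengths $\mathrm{A}=[\sum_i(c_i+s_ia_i/a)^2]^{1/2}$, $\mathrm{B}=[\sum_i(c_i+s_ib_i/b)^2]^{1/2}$ emerge naturally. Cases~4--6 use the additional isotropy $v_1=v_2=v_3=v$ to eliminate the $\vec z^B$-direction dependence in the same way, so that only the projection $\vec b\cdot\vec z^B$ or $\vec a\cdot\vec z^B$ survives.

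Once the candidate measurement has been identified from symmetry, the proof that it actually achieves $\max(G+F)$ follows the monotonicity argument in the proof of Theorem~\ref{Th2}. One substitutes the candidate $\vec z^A$ into $F$, treats the result as a function of the remaining one-dimensional variable (the projection of $\vec z^B$ onto the distinguished axis), and factors the partial derivative into manifestly nonnegative pieces using the structure of $H_\varepsilon$ and the logarithmic inequalities of that proof. Collecting $-S_{ABC}(\rho)+S_A(\rho^a)+(G+F)_{\max}$ and simplifying via the $H_\varepsilon(x)+H_{-\varepsilon}(x)$ identities then delivers each of the six closed-form expressions.

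The main obstacle is the verification that the symmetry-selected direction indeed maximizes $G+F$ \emph{jointly} rather than just each summand separately: both $G$ and $F$ depend on $\vec z^A$ through $\vec a\cdot\vec z^A$, $\vec b\cdot\vec z^A$ and several post-measurement norms, so their individual optima could in principle disagree. The isotropy hypotheses do exactly the work needed here, collapsing the $\vec z^A$-dependence of $F$ into the same inner products that appear in $G$ and allowing the one-dimensional monotonicity argument of Theorem~\ref{Th2} to be applied uniformly. A case-by-case sign analysis analogous to the $(b_3+r_3)(c_3+s_3)\leq 0$ dichotomy in that proof, adapted to the sign patterns present in each of the six subcases, then completes the argument.
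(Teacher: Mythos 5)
Your proposal follows essentially the same route as the paper's proof: use the isotropy hypothesis $r_1=r_2=r_3=r$ (resp.\ $v_i=v$) to argue that the optimal $\vec z^A$ lies along $\vec b/b$ or $\vec a/a$, reduce $F$ to a one-variable function of the projection $\theta$ of $\vec z^B$ onto that distinguished axis, show $\partial F/\partial\theta\geq 0$, and maximize $\theta$ on the unit sphere (the paper does this last step by Lagrange multipliers, giving $z_i^B=b_i/b$ and $\theta=r$). Your explicit flagging of the joint-maximization issue is a point the paper glosses over, but the overall argument is the same.
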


\begin{proof}
All cases can be shown similarly. Let's consider case 1: $a_i=v_i=T_i=0,r_1=r_2=r_3=r$, 
$\max{\{G(z_1^A,z_2^A,z_3^A)\}}=G(\frac{b_1}{b},\frac{b_2}{b},\frac{b_3}{b})$. Let $\theta=\sum_i^3rz_i^B\frac{b_i}{b}$, then
\begin{equation}
\begin{split}
&F(z_1^A,z_2^A,z_3^A,z_1^B,z_2^B,z_3^B)=F(\frac{b_1}{b},\frac{b_2}{b},\frac{b_3}{b},\theta)\\
=&\frac{1}{2}[H_{b+\mathrm{B}}(\theta)+H_{b-\mathrm{B}}(\theta)+H_{-b+\mathrm{B}}(\theta)+H_{-b-\mathrm{B}}(\theta)]-H_{b}(\theta)-H_{-b}(\theta)-2,
\end{split}
\end{equation}
where $\mathrm{B}=[\sum_i^3(s_i\frac{b_i}{b}+c_i)^2]^{\frac{1}{2}}$.

The derivative of $F$ over $\theta$ is equal to
\begin{equation}
\begin{split}
\frac{\partial F}{\partial\theta}=&\frac{1}{4}[\log_2\frac{(1+b+\mathrm{B}+\theta)(1+b-\mathrm{B}+\theta)(1+b-\theta)^2}{(1+b+\mathrm{B}-\theta)(1+b-\mathrm{B}-\theta)(1+b+\theta)^2}\\
+&\log_2\frac{(1-b-\mathrm{B}+\theta)(1-b+\mathrm{B}+\theta)(1-b-\theta)^2}{(1-b-\mathrm{B}-\theta)(1-b+\mathrm{B}-\theta)(1-b+\theta)^2}].
\end{split}
\end{equation}
Obviously, $\frac{\partial F}{\partial \theta}\geq0$ when $\theta\in[0,1]$. Then $F(\theta)$ is a strictly increasing function and $\max{F(\theta)}=F(\max\{\theta\})$.

Let $Y=\theta+\mu[1-(z_1^B)^2-(z_2^B)^2-(z_3^B)^2]$,
$\frac{\partial Y}{\partial z_1^B}=r\frac{b_1}{b}-2\mu z_1^B$, $\frac{\partial Y}{\partial z_2^B}=r\frac{b_2}{b}-2\mu z_2^B$, $\frac{\partial Y}{\partial z_3^B}=r\frac{b_3}{b}-2\mu z_3^B$, $\frac{\partial Y}{\partial \mu}=1-(z_1^B)^2-(z_2^B)^2-(z_3^B)^2$.
Imposing $\frac{\partial Y}{\partial z_1^B}=0, \frac{\partial Y}{\partial z_2^B}=0, \frac{\partial Y}{\partial z_3^B}=0, \frac{\partial Y}{\partial \mu}=0$, we have $z_i^B=\frac{b_i}{b}$. So $\max\{\theta\}= r$ and $\max{F(\theta)}=F(r)$, then case 1 is shown.
\end{proof}

Example 1. For a state in Eq.(\ref{eq:3.1}), when $a_1=0, a_2= 0, a_3 = 0.03, b_1 = 0, b_2 = 0, b_3 = 0.25, c_1 = 0.12, c_2= 0.12, c_3 = 0.01, r_1 = 0.1, r_2 = 0.1, r_3 = -0.3, s_1= 0.13, s_2 = 0.13, s_3 =-0.26, v_1 = 0, v_2 = 0, v_3 = 0, T_1 = -0.02, T_2 = -0.02, T_3 = -0.36$.
According to the case 1 of Theorem \ref{Th2}, we have $\mathcal{Q}=0.8889$. Fig. 2 shows the behavior of the quantum discord $\mathcal{Q}$.
\begin{figure}[!t]
\centering
\includegraphics[width=3.5in]{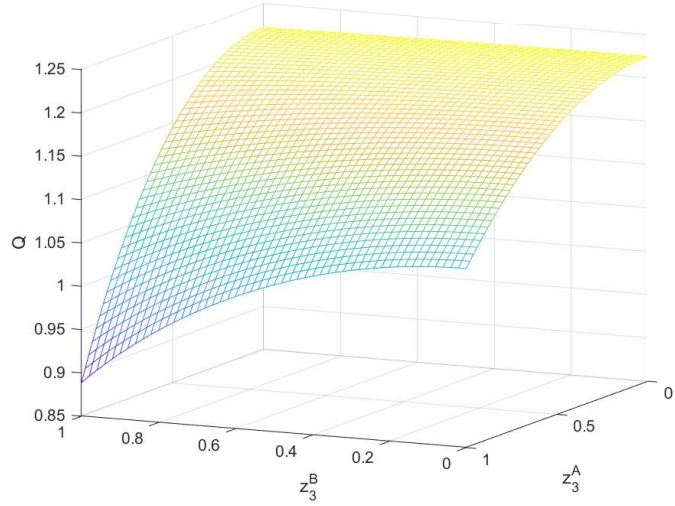}
\small{\caption{The behavior of the quantum discord $\mathcal{Q}$ with respect to the parameters in Example 1. In this case, $z_1^A=z_2^A=z_1^B=z_2^B=0$, the quantum discord $\mathcal{Q}$ is only related to variables $z_3^A, z_3^B$. Then $\mathcal{Q}=0.8889$.}}
\label{fig:2}
\end{figure}

Example 2. For a state of the case 1 in Theorem \ref{Th3}, when $a_1=a_2=a_3 = 0, b_1 = 0.2, b_2 = 0.05, b_3 = 0.1, c_1 = 0.04, c_2= 0.06, c_3 = 0.11, r_1 = r_2 = r_3 = 0.17, s_1= 0.08, s_2 = 0.15, s_3 =0.25,
v_1 = v_2 = v_3 = T_1 = T_2 = T_3 = 0$.
Then the quantum discord is $\mathcal{Q}=0.9970$. Fig. 3 and Fig. 4 show the behavior of the function $G$ and $F$ respectively.
\begin{figure}[!t]
\centering
\includegraphics[width=4.5in]{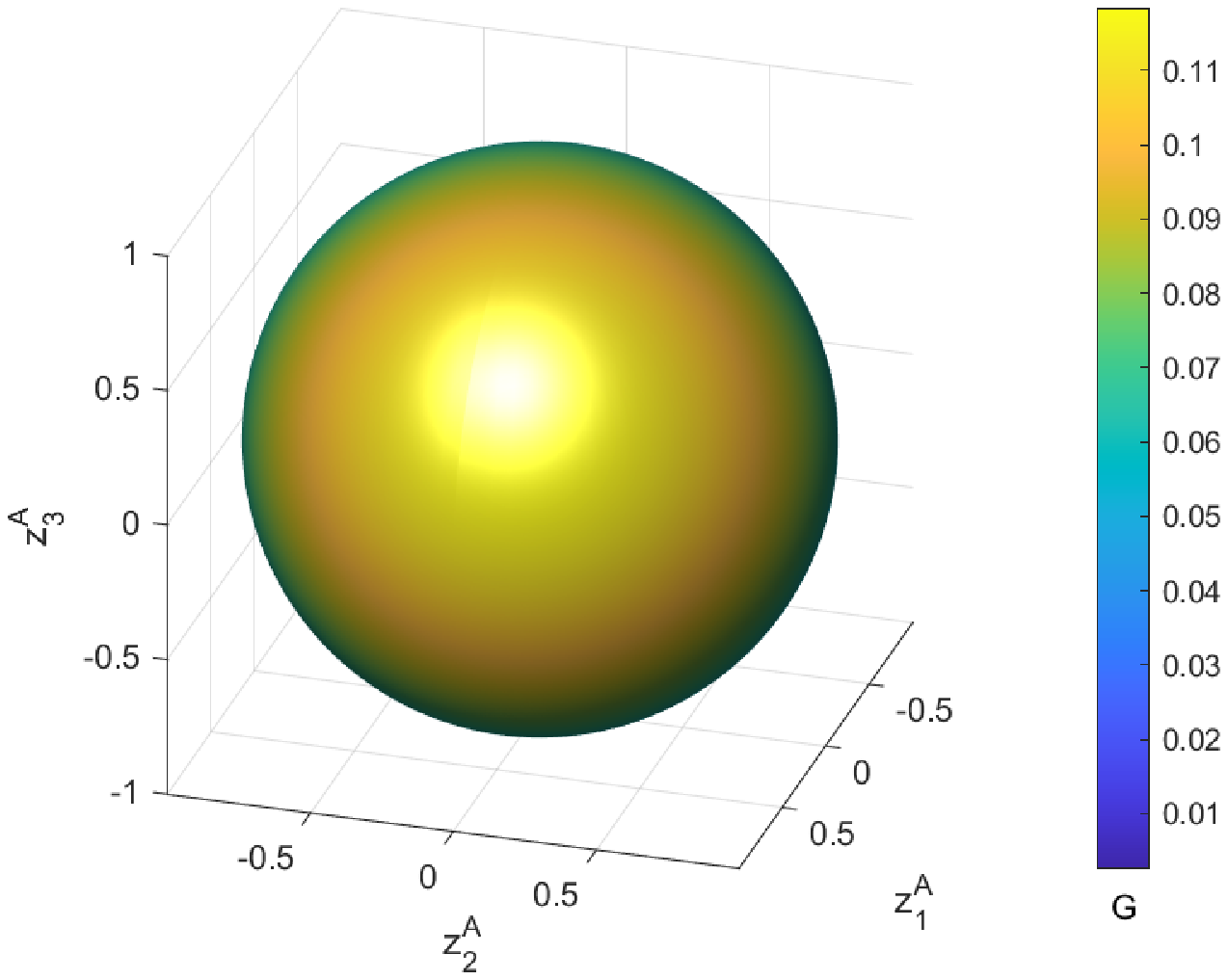}
\small{\caption{ The behavior of $G(z_1^A, z_2^A, z_3^A)$ with the variables $z_1^A, z_2^A, z_3^A$ in Example 2, where $z^A=(z_1^A, z_2^A, z_3^A)$ is on a unit sphere. This is a four-dimensional figure. Among them, the intensity of light is used to indicate the magnitude of the $G$ value. The brighter the point, the greater the value of $G$ and $\max G=0.1182$.}}
\label{fig:3}
\end{figure}

\begin{figure}[!t]
\centering
\includegraphics[width=4.5in]{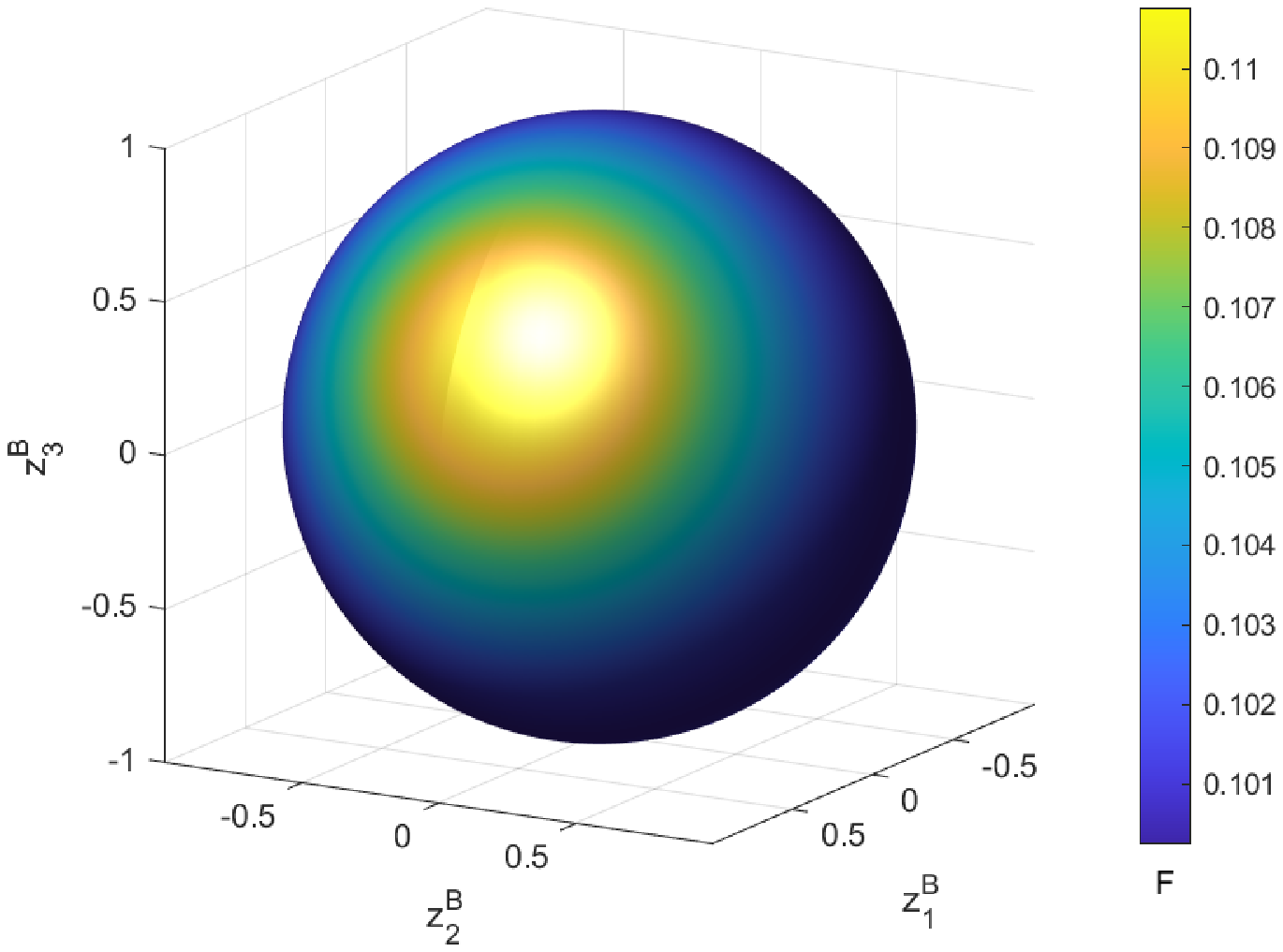}
\small{\caption{The behavior of $F(0.8729, 0.2182, 0.4364, z_1^B, z_2^B, z_3^B)$ with the variables $z_1^B, z_2^B, z_3^B$ in Example 2, where $z^B=(z_1^B, z_2^B, z_3^B)$ is on a unit sphere. The brighter the point, the greater the value of $F$ and $\max F=0.1107$.}}
\label{fig:4}
\end{figure}

\section{Conclusions}
Quantum discord is one of the important correlations in studying quantum systems. It is well-known that the quantum discord is hard to compute explicitly, and only sporadic formulas are known, for instance, the Bell state and the X-state etc. Recently important progresses are made to generalize the notion to multipartite quantum systems \cite{RLB}, and their explicit formulas are expectedly not easy to find. In this work, we have found explicit formulas of the quantum discord for tripartite non X-states with 14 parameters, including some famous states such as the Werner-GHZ state.
\vskip 1in

\centerline{\bf Acknowledgments}
\medskip
The research is supported in part by the NSFC
grants 11871325 and 12126351, and Natural Science Foundation of Hubei Province
grant no. 2020CFB538 as well as Simons Foundation
grant no. 523868.
\vskip 0.1in

\bibliographystyle{amsalpha}

\end{document}